\newcolumntype{L}[1]{>{\raggedright\let\newline\\\arraybackslash\hspace{0pt}}m{#1}}
\newtheorem{theorem}{Theorem}
\newtheorem{lemma}[theorem]{Lemma}
\newtheorem{corollary}[theorem]{Corollary}
\newcommand{\f}[2]{ f^{#1}_{[#2]} }
\newcommand{\ff}{ F }
\begin{document}
	
	\title{Transaction Propagation on Permissionless Blockchains: Incentive and Routing Mechanisms}

	
	\author{\IEEEauthorblockN{O\u{g}uzhan Ersoy,
			Zhijie Ren, Zekeriya Erkin and
			Reginald L. Lagendijk}
		\IEEEauthorblockA{Cyber Security Group, Department of Intelligent Systems,\\
			Delft University of Technology\\
			Delft, The Netherlands\\
			Email: o.ersoy@tudelft.nl,
			z.ren@tudelft.nl,
			z.erkin@tudelft.nl,
			r.l.lagendijk@tudelft.nl}}
	
%
%
%

\maketitle

\begin{abstract}
Existing permissionless blockchain solutions rely on peer-to-peer propagation mechanisms, where nodes in a network transfer transaction they received to their neighbors. 
Unfortunately, there is no explicit incentive for such transaction propagation. 
Therefore, existing propagation mechanisms will not be sustainable in a fully decentralized blockchain with rational nodes. 
In this work, we formally define the problem of incentivizing nodes for transaction propagation. 
We propose an incentive mechanism where each node involved in the propagation of a transaction receives a share of the transaction fee. 
We also show that our proposal is Sybil-proof. 
Furthermore, we combine the incentive mechanism with smart routing to reduce the communication and storage costs at the same time. 
The proposed routing mechanism reduces the redundant transaction propagation from the size of the network to a factor of average shortest path length. 
The routing mechanism is built upon a specific type of consensus protocol where the round leader who creates the transaction block is known in advance. 
Note that our routing mechanism is a generic one and can be adopted independently from the incentive mechanism.
\end{abstract}

\begin{IEEEkeywords}
Blockchain, transaction propagation, incentive, routing.
\end{IEEEkeywords}

\section{Introduction}


In this work, we investigate transaction propagation on permissionless blockchains with respect to incentive compatibility and bandwidth efficiency. 
The former, incentive compatibility, is an essential component of permissionless blockchain to maintain its functionality with rational participants \cite{peters2016understanding, Sompolinsky:2018:BUI:3190347.3152481}. The latter, bandwidth efficiency, is an important factor for efficient use of limited resources available in the network.   

Although a number of works have studied incentive compatibility problem of blockchains, they are limited to mining mechanism, e.g. investigating  \emph{selfish mining attacks} \cite{eyal2014majority,sapirshtein2016optimal,nayak2016stubborn,DBLP:conf/ccs/GervaisKWGRC16}, and \emph{block withholding attacks} \cite{DBLP:journals/corr/abs-1112-4980,DBLP:journals/corr/CourtoisB14,DBLP:conf/sp/Eyal15,DBLP:journals/tifs/BagRS17}.
The existing blockchain solutions such as Bitcoin \cite{nakamoto2008bitcoin} and Ethereum \cite{wood2014ethereum} do not pay attention to incentives for transaction propagation in the network. This is due to the fact that the mining networks in those solutions are centralized in practice \cite{gervais2014bitcoin,miller2015discovering,gencer2018decentralization} and thus, they do not exhibit a fully decentralized structure.
There are only two works that address incentive compatibility of transaction propagation in blockchain by Babaioff et al.\ \cite{babaioff2012bitcoin} and Abraham et al.\ \cite{abraham2016solidus}. Unfortunately, both works suggest a specific solution for the incentive compatibility but do not provide a formal definition of the problem. Furthermore,  the proposed solutions are also designed for certain network topologies.

In terms of bandwidth inefficiency, existing solutions suffer from multiple broadcasting of the same transaction over the network. 
For example, in Bitcoin, each transaction is received by the nodes (miners) in the network twice: once during the advertisement, i.e.\ broadcasting of the transaction at the beginning, and once after the validation, i.e.\ broadcasting of the block including the transaction. While validation is essential since each node in the network stores every validated transaction, the advertisement does not need to be received by all nodes. However, redundancy for advertisement is inevitable in such cases where the round leader who creates the validated block is unknown in advance since the transaction needs to be broadcast to all potential round leaders. In recent blockchain proposals where the round leader is known in advance, what we call \textit{first-leader-then-block} (FLTB) type of consensus protocols \cite{eyal16bitcoin,bentov2016cryptocurrencies,bentov2014proof,ouroboros},  it is possible to improve bandwidth efficiency by reducing the communication cost by directly routing the transaction to the round leader. To the best of our knowledge, there is no prior work on optimizing bandwidth efficiency for fully decentralized blockchain.



In this work, our contribution is three-fold: 1) Sybil-proof incentive compatible propagation mechanism, 2) bandwidth-efficient routing mechanism, and 3) bandwidth and storage efficient transaction propagation mechanism which combines the first two mechanisms.

We formally define incentive compatibility of propagation mechanisms in fully decentralized blockchain networks.
We show that there is no Sybil-proof and incentive compatible propagation mechanism for poorly connected networks (specifically for 1-connected networks).
For other network topologies, we find the following incentive compatible and Sybil-proof formula, which distributes the transaction fee among propagating nodes:
	\begin{IEEEeqnarray*}{ll}
	\f{k}{i}= \begin{cases}
		\ff\cdot C(1-C)^{i-1} &\text{ for } i<k, \\
		\ff\cdot (1-C)^{k-1} &\text{ for } i=k,
	\end{cases}
\end{IEEEeqnarray*}
where $\ff$ is the fee, $k$ is the length of the propagation path, $\f{k}{i}$ is the share of the $i^{th}$ node in that path, and $C$ is a parameter related to the network topology.
The incentive mechanism is independent of the choice of consensus protocol and works with any consensus protocol.

We propose a routing mechanism compatible with FLTB-type consensus protocols. Our proposal reduces the communication cost of the transaction propagation from the size of the network to the scale of average shortest path length. 
In a random network topology of more than 500 nodes, we achieve over $97\%$ communication cost reduction compared to de facto propagation mechanism for the advertisement.
Furthermore, we also present a propagation mechanism which combines our incentive and routing mechanisms in a storage and bandwidth efficient way.
For incentive mechanism, our combined protocol requires storing only a single signature to provide the integrity of the path, unlike the existing works, which use a signature chain including signatures of each node in the path. 

The rest of the paper is organized as follows: 
Section \ref{sec:relatedwork} presents the related work.
Our blockchain model and notations are defined in Section \ref{sec:notation}.
Section \ref{sec:incentive} formulates requirements of the incentive problem and computes the generic solution. Smart routing mechanism is presented in Section \ref{sec:routing} and combined with incentive mechanism in Section \ref{sec:combined}. 

\section{Related Work}\label{sec:relatedwork}

The lack of incentive for information propagation in a peer-to-peer network has been known and studied in different settings \cite{drucker2012simpler,emek2011mechanisms,kleinberg2005query,li2009incentive}. 
Kleinberg and Raghavan \cite{kleinberg2005query} proposed an incentive scheme for finding the answer for a given query in a tree-structured network topology. 
Li et al.\ \cite{li2009incentive} focused on node discovery in a homogeneous network where each node has the same probability of having an answer for the query. 
In \cite{drucker2012simpler,emek2011mechanisms}, the authors analyzed the incentive problem for multi-level marketing which rewards referrals if the advertisement produces a purchase.
In these marketing models, the reward is shared among all nodes in the tree including the propagation path.


The proposed solutions for peer-to-peer networks \cite{drucker2012simpler,emek2011mechanisms,kleinberg2005query,li2009incentive} are not applicable for the permissionless blockchains.
In peer-to-peer solutions, nodes are asked to provide a specific datum like the position of a peer or the answer to a query.
In blockchains, however, transaction propagation is requested to advertise the transactions and eventually place them into a valid block.
Alternatively, finding an answer to a query is equivalent to validation of a transaction by round leader in the blockchain.
Query propagation in a peer-to-peer network has two main differences compared to a blockchain transaction propagation: nodes do not compete against the ones who forwarded the message to them and nodes cannot generate a response to a query that they do not have the answer, i.e.\, either they have the right answer or not.
Whereas in a blockchain, a block is generated by the round leader and every node is a potential round leader. 
Essentially, nodes in a blockchain are competitors that have an incentive not to propagate whereas other peer-to-peer nodes do not have the incentive since they cannot generate the answer to the query by themselves.

Recently, blockchain oriented propagation mechanisms have been proposed \cite{abraham2016solidus,babaioff2012bitcoin}.
In \cite{babaioff2012bitcoin}, Babaioff et al. uncovered the incentive problem in the Bitcoin system where a rational node (miner) has no incentive to propagate a transaction. 
They focused on a specific type of network, namely regular $d$-ary directed tree with a height $H$, and assumed that each node has the same processing power.
In that setting, the authors proposed a hybrid incentive (rewarding) scheme and proved that it is also Sybil-proof.
In \cite{abraham2016solidus}, Abraham et al. proposed a consensus mechanism, Solidus, offering an incentive to propagate transactions and validated blocks (puzzles). 
In their incentive mechanism, the amount of processing fee passed to the next node is determined by the sender.
Both works rely on a signature chain to prevent any manipulation over the path and thereby, to secure the shares of propagating nodes.

\cite{babaioff2012bitcoin} and \cite{abraham2016solidus} provided analyses of their proposals based on game theory.
For the analysis, \cite{babaioff2012bitcoin} assumes a tree-structured network which eliminates the case of competition against common neighbors and it is not realistic for blockchain network topology.
Whereas, the analysis in \cite{abraham2016solidus} is limited to the case of competition between nodes that have common neighbors.


For bandwidth efficiency, to the best of our knowledge, there is no prior work for fully decentralized blockchain without dedicated miners (round leaders).
Nevertheless, Li et al. \cite{li2009incentive} presented a distributed routing scheme for peer-to-peer networks.
The authors focused on one-to-one routing which is dedicated to a single target node.
Whereas in blockchain it needs to be one-to-all routing, which connects the complete network to the round leader.
In addition, \cite{li2009incentive} does not take into account the possibility of a failing routing caused by a failing or malicious node in the routing path.

\section{Our Blockchain Model and Notations}\label{sec:notation}
In this section, we describe our blockchain model and the notation used in the paper.


\noindent\textbf{Network.} 
It is a dynamic peer-to-peer network means that there are nodes joining and leaving constantly.
Unlike to the existing works \cite{abraham2016solidus,babaioff2012bitcoin}, we do not have a restriction on the network topology.

\noindent\textbf{Participants.}  
Each participant is denoted by a node in the network. 
We assume a permissionless blockchain where anyone can participate and contribute to the ledger directly.
Moreover, there is no discrimination between nodes (participants), i.e., they can all be the owner of a transaction and propose a block as a miner (round leader). 
For identification, each node has a public and private key pair and can be validated by his public key.

\noindent\textbf{Consensus and leader election.}
Incentive mechanism defined in Section \ref{sec:incentive} works regardless of the consensus structure. 
Whereas, the routing mechanism requires special treatment, which we call \textit{first-leader-then-block} (\textit{FLTB}) type consensus protocols.

{FLTB} protocols can be defined as the consensus model where the round leader is validated before he proposes the block.
Any leader election mechanism which is independent of the prospective block of that leader can be converted into {FLTB} type.
Examples of the {FLTB} consensus protocols are Proof-of-Work (PoW) based Bitcoin-NG \cite{eyal16bitcoin} and several Proof-of-Stake (PoS) based ones \cite{bentov2016cryptocurrencies,bentov2014proof,ouroboros}.


The rest of the definitions and notations are listed below:
\begin{itemize}
	\item \textit{Neighbor nodes}: Directly connected nodes in the network, adjacency in the graph.
	\item \textit{Client}: The source or the sender of a transaction. Client of a transaction $T$, denoted by $c_T$. 
	\item \textit{Round Leader}: The legitimate node (participant) who constructs the current block. 
	\item \textit{Intermediary Node}: A node on the transmission path between the round leader and a client.
	\item $\mathcal{L}^r$: The credential of round leader which validates the round leader of round $r$ and can be verified by all nodes in the network.
	For example, it could be a special hash value in a PoW protocol or the proof of possessing the chosen coin in a  PoS protocol.
	In general, regardless of the consensus mechanism, credentials are linked to the public key of the leader and can be verified by a corresponding signature.
	\item $\pi{(n_i)}$: The probability of node $n_i$ being the round leader, also referred as the capacity of node $n_i$. 
	It corresponds to the mining power in PoW or the stake size in PoS protocols and is assumed to be greater than zero for every node in the network. 
	$\pi(S)$ corresponds to the total capacity of the all nodes in set $S$.
	\item $\mathcal{N}_K^T$: The set of nodes who know (received) the transaction $T$. $\mathcal{N}_K^{n,T}$ presents the set from the point of view of node $n$ (including $n$ itself).
	\item $\mathcal{N}_{NK}^T$: The set of nodes who do not know (received) transaction $T$ yet. $\mathcal{N}_{NK}^{n,T}$ denotes the set from the point of view of node $n$ and includes only the neighbors of $n$.
\end{itemize}

\section{Incentive Mechanism}\label{sec:incentive}
We now describe our incentive mechanism.
For the sustainable functioning of a fully decentralized blockchain where the nodes (participants) are able to create new identities and behave according to their incentives, propagation mechanism needs to be Sybil-proof and incentive compatible \cite{peters2016understanding}.

Conventional incentive instrument, namely transaction fee, almost always refers to the reward of the round leader.
Here, we refer transaction fee as it consists of the reward to propagate and to validate transactions.
Thereby, rational nodes are encouraged not only to validate transactions but also to propagate them.
How to determine the fee is out of the scope of this paper but we assume that each transaction fee is predefined by either the client or a known function.
We focus on how to automatically allocate the fee among all the contributors of the process.

\noindent\textbf{Fee sharing function (rewarding mechanism).}
The fee sharing function distributes the transaction fee among the propagating nodes and the round leader. 
Note that it is highly probable that the same transaction is received more than once by the round leader (and intermediary nodes) because of the propagation mechanism. 
A rational round leader would choose the one which maximizes his profit.
Like existing works \cite{abraham2016solidus,babaioff2012bitcoin}, the fee sharing function described here deals with the path which is included in the block.
For a transaction (added to the block) with fee $\ff$ and propagation path $P$, the function $\mathcal{F}$ determines the shares of each node involved:
\begin{IEEEeqnarray*}{l}
\mathcal{F}: \{\ff,P\} \longrightarrow \{ \f{|P|}{i} \}_{i=1}^{|P|} \text{ where } \sum_{i=1}^{|P|}\f{|P|}{i}=\ff.
\end{IEEEeqnarray*}
$|P|$ denotes the number of nodes involved in the processing of a transaction with fee $\ff$, where $|P|-1$ of the nodes are in the propagation path between the client and the round leader.
Let $|P|=k$, i.e., length of the propagation path of the transaction is $k$.
Then, $\f{|P|}{i}$ denotes the share of $i^{th}$ node in the propagation path, $\f{k}{k}$ is the share of the round leader and $\sum_{i=1}^{k}\f{k}{i}=\ff$.


In the rest of the section, we formulate the necessities of the fee sharing function to incentivize propagation of an arbitrary transaction $T$ with fee $\ff$. 
An ideal incentive compatible propagation mechanism should satisfy the following properties:
\begin{enumerate}
	\item \textit{Sybil-proofness}: An intermediary node, as well as the round leader, should not benefit from introducing Sybil nodes to the network.
	\item \textit{Game theoretically soundness}: A transaction should not be kept among a subset of the network. 
	There should be adequate incentive for rational nodes willing to propagate, thence it will eventually reach to the whole network.
\end{enumerate}


By formulating these conditions, we achieve the following theorem (where $C$ is a constant which can be chosen according to the network connectivity):
\begin{theorem}\label{thm:main}
In a 2- or more connected blockchain network, each rational node $n\in \mathcal{N}_{K}^{T}$ with $ \pi(n) < C\cdot\pi(\mathcal{N}_{K}^{n,T})$ propagates transaction $T$ without introducing Sybil nodes, if the transaction fee $\ff$ is shared by the following method:
\begin{IEEEeqnarray}{ll}
\f{k}{i}= \begin{cases}
\ff\cdot C(1-C)^{i-1} &\text{ for } 1\leq i<k, \\
\ff\cdot (1-C)^{k-1} &\text{ for } i=k.
\end{cases}\nonumber
\end{IEEEeqnarray}
\end{theorem}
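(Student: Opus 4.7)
The plan is to decompose the statement into two independent properties that the formula must satisfy: (i) under the hypothesis $\pi(n) < C\cdot \pi(\mathcal{N}_K^{n,T})$, a rational $n$ strictly prefers propagating $T$ to at least one unknown neighbor $m\in\mathcal{N}_{NK}^{n,T}$ over withholding it, and (ii) no rational node can increase its expected share by inserting Sybil identities into the path it forwards. I would verify each separately and then combine them to obtain the theorem.

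For part (i), the game-theoretic soundness, I would compute the expected utility of $n$ at position $i$ under the two strategies ``withhold'' and ``propagate to $m$''. The decisive structural feature of the formula is that the intermediary share $\f{k}{i}=\ff\cdot C(1-C)^{i-1}$ depends only on $i$, not on the overall length $k$; this decouples $n$'s reward from downstream propagation. Writing the utility as a weighted average over ``who is the round leader,'' with weights $\pi(\cdot)/\pi(\text{total})$, the incremental gain from propagating collapses to a single term proportional to $\ff(1-C)^{i-1}\bigl(C\cdot\pi(\mathcal{N}_K^{n,T})-\pi(n)\bigr)$, and positivity is exactly the hypothesis.

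For part (ii), the Sybil-proofness, the key is the geometric/telescoping identity already built into the formula. If $n$ (whether round leader or intermediary) controls $s$ Sybil identities occupying a contiguous block of positions in the extended path, $n$'s total collected share is a partial geometric sum. Using $\sum_{j=0}^{s-1}C(1-C)^{i+j-1}=(1-C)^{i-1}-(1-C)^{i+s-1}$, combined (when appropriate) with the terminal round-leader term $(1-C)^{i+s-1}$, the total collapses back to $\ff(1-C)^{i-1}$, i.e., to $n$'s share in the non-Sybil path. I would verify this cleanly first in the round-leader case, where the identity yields exact break-even and so no strict gain from Sybils.

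The main obstacle is the intermediary-Sybil case: the naive telescoping against $\f{k}{i}=\ff C(1-C)^{i-1}$ seems to indicate a strict gain. This is where the $2$-connectedness hypothesis has to do real work. Because at least one alternative, Sybil-free route exists from $c_T$ to every downstream node, and because a rational round leader strictly prefers shorter paths (its own share $\ff(1-C)^{k-1}$ is decreasing in $k$), the Sybil-inflated path is dominated whenever an alternative reaches the round leader. I would formalize this as a reduction in the inclusion probability of the Sybil path and show that this reduction exactly offsets the conditional extra shares, once again by exploiting the geometric form of $\f{k}{i}$. Nailing down this probabilistic-topological comparison is the delicate step, and is precisely why the theorem fails for $1$-connected networks, as the paper announces.
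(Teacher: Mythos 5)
Your decomposition into incentive compatibility and Sybil-proofness matches the paper's, and several of your structural observations are exactly the ones the paper relies on: the length-independence of the intermediary share $\f{k}{i}=\ff\cdot C(1-C)^{i-1}$ (the paper's permanence condition $\f{k}{i}=\f{k+1}{i}$), the reduction of the propagation decision to the sign of $C\cdot\pi(\mathcal{N}_{K}^{n,T})-\pi(n)$ (the Propagation Lemma with $\f{k+1}{k}/\f{k}{k}=C$), and the telescoping identity $\sum_{j=0}^{s-1}C(1-C)^{i+j-1}=(1-C)^{i-1}-(1-C)^{i+s-1}$ that makes the round leader exactly break even under $s$ Sybils. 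Still, two gaps remain. In part~(i) you only compare ``withhold'' against ``propagate'' when the other informed nodes stay silent; the paper's game is simultaneous-move, and it must (and does, in Table~\ref{tab:ER}) check that the same threshold suffices when other members of $\mathcal{N}_{K}^{n,T}$ also propagate and compete for common neighbors --- without that case the equilibrium claim is incomplete. The paper also needs its Equity Lemma (monotonicity of the expected reward in the propagated capacity) to reduce the strategy space to ``all neighbors or none,'' which your single-neighbor comparison skips.

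Second, and more seriously, your resolution of the intermediary-Sybil case will not go through as stated. You propose to show that the drop in the probability that the Sybil-padded path is included \emph{exactly offsets} the conditional extra shares. No such exact offset holds or is needed: the inclusion probability depends on network timing and topology, not on the fee formula, so there is nothing for the geometric form of $\f{k}{i}$ to cancel against. The paper's actual argument is a dominance argument: because the leader's own share $\ff(1-C)^{k-1}$ is strictly decreasing in $k$, a rational leader includes the shortest path it receives; 2-connectedness guarantees it receives at least one competing path that does not contain the Sybils, so padding one's own segment lengthens the path and loses the competition. The quantitative constraint $\f{k}{k}\geq\sum_{i=0}^{s}\f{k+s}{k+i}$ is then imposed only for the one participant that cannot be excluded this way --- the round leader itself --- and that is precisely the case where your telescoping identity closes with equality. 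Reframing your ``delicate step'' as this qualitative rejection argument, rather than a probabilistic cancellation, is what the 2-connectedness hypothesis is for; as your own calculation shows, an intermediary's conditional gain from Sybils is strictly positive, so no exact-offset theorem can exist.
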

Proof of the theorem is divided into the following sections.
The requirements are formulated in Sections \ref{sec:sybil} and \ref{sec:gametheory}, and the fee sharing function satisfying them is computed in Section \ref{sec:feesharing}.

\subsection{Sybil-Proofness}\label{sec:sybil}




Here, we use the same definition of Sybil nodes in \cite{babaioff2012bitcoin}: fake identities sharing the same neighbors with the original node that do not increase the connectivity of the network. 
Because of the Sybil-proof consensus algorithm, Sybil nodes do not increase the capacity of their owner, i.e., the probability of being the round leader.

We investigate the problem in two different settings: 1-connected networks and the rest.
$k$-connected network means that removal of any $k-1$ nodes does not disconnect the network.
In 1-connected networks, there exists a bridge which is the only connection between two distinct subnetworks.
Though 1-connected network model seems to be unrealistic topology for permissionless blockchains,
it is important to see the intuition behind the non-competition effect. 

\noindent\textbf{1-connected networks.}
In 1-connected networks, there are critical nodes which have special positions in the propagation paths between some node pairs. 
A critical node for a node pair appears in all possible paths between these two nodes. 
The following lemma shows that non-competing advantage of critical nodes makes it impossible to have a Sybil-proof incentive mechanism for 1-connected networks.

\begin{lemma}[Impossibility Lemma]\label{lem:imp}
For 1-connected networks, there is no Sybil-proof and incentive compatible propagation mechanism which rewards every node in the propagation path. 
\end{lemma}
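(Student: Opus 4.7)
My plan is to prove the lemma by contradiction. Assume a propagation mechanism on some $1$-connected network is simultaneously Sybil-proof, incentive compatible, and rewards every node, i.e.\ every share $\f{k}{i}$ is strictly positive. The goal is then to exhibit a profitable Sybil deviation at a cut vertex, contradicting Sybil-proofness.

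First I would specialize the topology to a path graph $n_1 - n_2 - \cdots - n_k$, which is $1$-connected. Fixing the client at $n_1$ and the round leader at $n_k$, the client-to-leader propagation path is uniquely determined, and every interior node $v = n_i$ is a cut vertex. Because all traffic between the two sides of $v$ must pass through $v$, the nodes downstream of $v$ have no way to tell whether the message they received came from $v$ directly or from a Sybil identity $v'$ that $v$ claims to have relayed through. Crucially, no alternative path exists across the cut, so the Sybil-augmented path is the only one that can reach the leader and is deterministically included in the block.

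Next I would quantify the attacker's gain. If $v$ inserts one Sybil immediately after itself in the signature chain, the path length becomes $k+1$, $v$ still occupies position $i$, and the Sybil occupies position $i+1$. Sybil-proofness thus reduces to the pointwise inequality $\f{k+1}{i} + \f{k+1}{i+1} \leq \f{k}{i}$, which must hold for every $1 \leq i \leq k$ (the case $i=k$ covers the round leader inserting a Sybil just before itself) and, by varying the length of the path graph, for every $k$. Summing over $i = 1, \ldots, k$ and using the normalization $\sum_{j=1}^{k+1} \f{k+1}{j} = \sum_{j=1}^{k} \f{k}{j} = \ff$, the left-hand side telescopes to $\ff + \sum_{i=2}^{k} \f{k+1}{i}$, which forces
\begin{equation*}
\sum_{i=2}^{k} \f{k+1}{i} \;\leq\; 0.
\end{equation*}
For any $k \geq 2$ this sum contains a strictly positive term by the rewards-every-node hypothesis, yielding the desired contradiction.

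The main obstacle I anticipate is the justification of the pointwise Sybil-proofness inequality used above. In a multi-connected network one can only assert that the \emph{expected} reward of an attacker does not increase, because the Sybil-induced path inflation may be punished by the round leader preferring a shorter competing path. The whole point of the $1$-connected setting is that the absence of competition at a cut vertex eliminates this averaging and forces the much stronger pointwise condition. Making this collapse of an expectation-weighted guarantee into a deterministic share-wise inequality rigorous — tying it explicitly to the non-existence of alternative paths across the bridge — is where I would spend most of the care, and it is also what distinguishes this impossibility from the achievable $2$-or-more connected case addressed by Theorem~\ref{thm:main}.
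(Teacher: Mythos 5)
Your proposal is correct and follows essentially the same route as the paper: exploit the cut vertex's guaranteed inclusion to force the pointwise condition $\f{k}{i} \geq \f{k+1}{i} + \f{k+1}{i+1}$ for all $i$, sum over $i$, and use the normalization $\sum_i \f{k}{i} = \ff$ to conclude that all shares other than $\f{k+1}{1}$ and $\f{k+1}{k+1}$ must vanish, contradicting the rewards-every-node hypothesis. The only cosmetic difference is that you specialize to a path graph while the paper argues at a generic bridge node; the algebra and the contradiction are identical.
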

\begin{proof}
	Assume that, because of 1-connectedness of the network, a node $n_i$ have a critical position for a transaction $T$, meaning that it is certain he will be included in the propagation path of that transaction.
	If $n_i$ is one side of the bridge combining two distinct subnetworks, $n_i$ can be sure that each transaction coming from its subnetwork and validated in the other one has to pass through $n_i$.
	In Figure \ref{fig:sybil_appendix}, we illustrate the two possible paths of a transaction passing through $n_i$.
	Since the round leader and also intermediary nodes after $n_i$ will receive one of the paths, they do not have any choice but accept the path sent by $n_i$.
	
	\begin{figure}[htb]
		\begin{center}
			\includegraphics[scale=0.5]{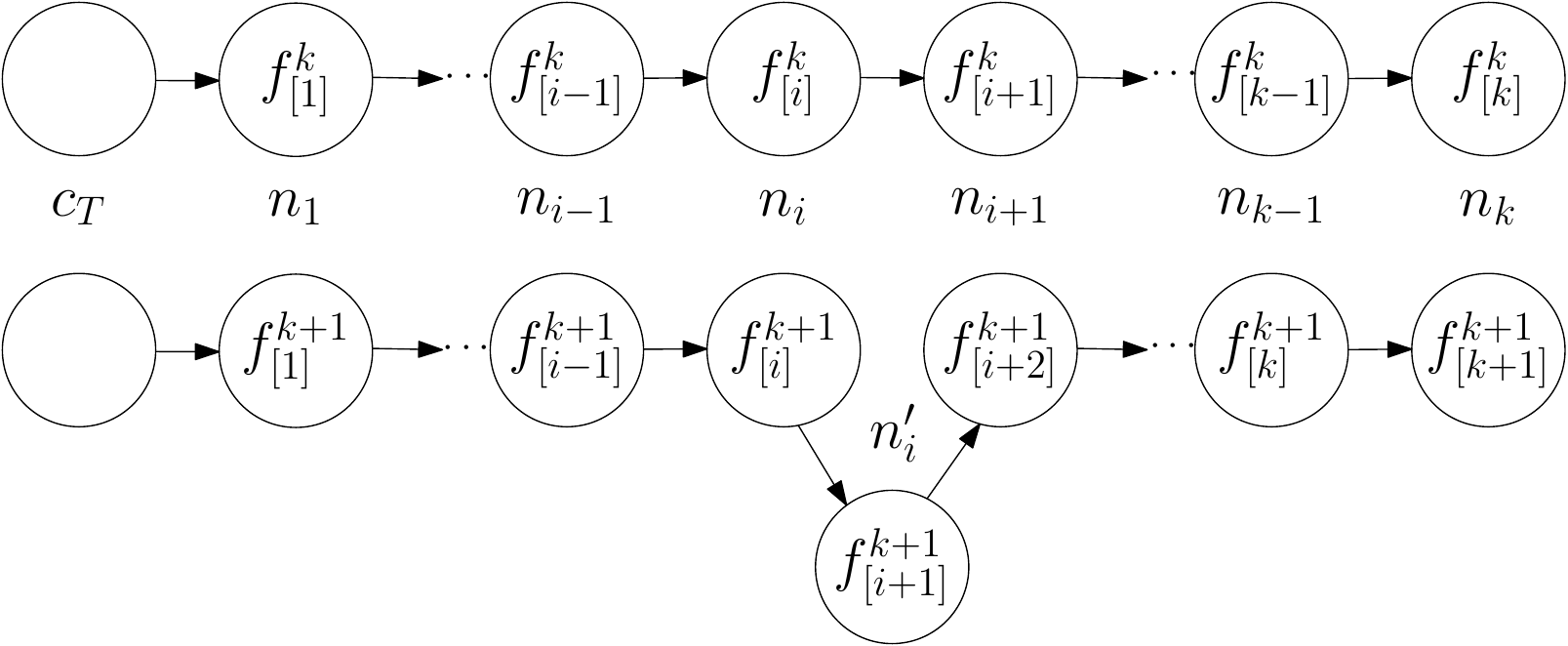}
			\caption{The fee sharing before and after a Sybil node $n_{i'}$ added by $n_i$}\label{fig:sybil_appendix}
		\end{center}
	\end{figure}
	
	Now, we investigate the share of a node $n_i$ with and without a Sybil node. 
	As given in Figure \ref{fig:sybil_appendix}, $n_i$ is the $i^{th}$ node in the original propagation path and his corresponding fee shares are $\f{k}{i}$ and $\f{k+1}{i} +\f{k+1}{i+1}$. 
	In order to demotivate $n_i$, $\f{k}{i}$ should be greater than or equal to $\f{k+1}{i} +\f{k+1}{i+1}$.  
	Since the position of the node would change for different transactions and rounds, the condition should hold for all positions:
	\begin{IEEEeqnarray*}{rll}
		\forall\, i \in \{1,\ldots, k\}, \quad &&\f{k}{i} \geq \f{k+1}{i} + \f{k+1}{i+1} \\
		\text{(summing for all $i$'s)}\implies &&\sum_{i=1}^{k} \f{k}{i} \geq \sum_{i=1}^{k} \f{k+1}{i} + \sum_{i=1}^{k} \f{k+1}{i+1} \\
		\text{(Definition of $\mathcal{F}$)}\implies && \ff \geq \ff - \f{k+1}{k+1} + \ff - \f{k+1}{1} \\
		\implies &&\f{k+1}{k+1} +  \f{k+1}{1} \geq \ff \\
		\text{(Definition of $\mathcal{F}$)} \implies &&\f{k+1}{k+1} +  \f{k+1}{1} = \ff \,.
	\end{IEEEeqnarray*}
	Therefore, other than the first propagating node and the round leader, there is no reward for the rest of the propagating nodes which contradicts with rational behavior. 
\end{proof}

\noindent\textbf{Eclipse and partitioning.} Note that this monopolized behavior is similar to the eclipse and partitioning attacks where the adversary separates the network into two distinct group and controls all the connections between them \cite{heilman2015eclipse,apostolakihijacking}.
Indeed, Lemma \ref{lem:imp} can be generalized to the case where the adversary is able to control all the outgoing connections of a client.
In that case, there is no way to deviate the adversary from creating Sybil nodes for that specific transaction.
We assume that client nodes are able to defend against the eclipse attacks using the countermeasures defined in \cite{heilman2015eclipse}.

In a 2- or more connected network, there are multiple paths between any two nodes. 
Therefore, we can immediately focus on the multiple paths case where there are competing paths for the same transaction and the round leader includes one of them to the block. 

A node can profit from a fee by either being an intermediary node who propagates it or being the round leader who creates the block.
We investigate the Sybil-proof conditions of intermediary nodes and the round leader separately.

\paragraph{Intermediary nodes}
An intermediary node can be deviated by the actions of the nodes who receive the transaction afterwards.
Since there are multiple paths, the round leader will receive the same transaction from at least two different paths.
In other words, the round leader would decline all but one of the paths (for each transaction).
An intermediary node will be demotivated if introducing a Sybil node would increase the chance of rejection of his path.

If the share of the round leader decreases as the propagation path length increases, then he will choose the shortest path for each transaction. 
In that case, introducing Sybil nodes will decrease his chance to be included in the block. 
Therefore, providing larger gain to the leader for choosing the shortest path is sufficient and can be formulated as  $\f{k}{k} > \f{k+1}{k+1}$.

\paragraph{Round leader} In some cases, round leader is determined before the block is created or even several rounds earlier \cite{bentov2016cryptocurrencies,bentov2014proof,eyal16bitcoin}.
Since the round leader is guaranteed to be in the propagation path, it is needed to be taken into account separately. 
In addition, an intermediary node can propagate righteously to his neighbors and then add Sybil nodes for his own mining process.
Therefore, in any case (predefined leader or not), it is necessary to make an additional policy for the round leader.

In the case of $s$ Sybil nodes, share of the round leader will change from $\f{k}{k}$ to $\sum_{i=0}^{s} \f{k+s}{k+i}$ for some $k$.
In order to deviate the round leader, $\f{k}{k} \geq \sum_{i=0}^{s} \f{k+s}{k+i}$ is required.

Since the latter condition includes the former one (as $\f{k+1}{k}>0$), Sybil proofness condition can be formulated as: 
\begin{IEEEeqnarray}{l}\label{eqn:sybil}
\forall\, k\geq 1, \forall\, s\geq 1, \quad \f{k}{k} \geq \sum_{i=0}^{s} \f{k+s}{k+i}\,.
\end{IEEEeqnarray}

\subsection{Incentive Compatibility}\label{sec:gametheory}

The decision of the propagation of a transaction can be analyzed as a simultaneous move game where each party takes action without knowing strategies of the others. 
All players (nodes in our case) are assumed to be rational and they decide their actions deducing that the others will also act rationally.
Some nodes may cooperate with each other. 
We assume that colluding neighboring nodes already share every transaction with each other and take actions as one. 
In other words, they act as a single combined node in the network which can be seen as Sybil nodes.

Here, we investigate the propagation decision by comparing the change in the expected rewards for a transaction $T$.
In the beginning, each transaction is shared with some nodes, at least with the neighbors of the client. 
We will find the required condition to propagate through the whole network.
We first investigate the propagation decision by comparing the change in the expected rewards immediately after the action. 
Then, we extend our analysis with a permanence condition which guarantees that the ones who propagate will not suffer from any future actions.

We show that the sharing decision of a node is independent of the probability of his neighboring nodes being the round leader.
Instead, it depends on his own probability against the rest who knows the transaction.

\begin{lemma}[Equity Lemma]\label{lem:share}
Propagation decision of a node is independent from the neighbors' capacities. A rational node would propagate to either all of its neighbors or none of them.
\end{lemma}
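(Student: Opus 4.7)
The plan is to parametrize node $n$'s propagation strategy by the subset $S \subseteq \mathcal{N}_{NK}^{n,T}$ of unknowing neighbors to which $n$ sends $T$, and to show that the sign of the marginal gain from adding a new neighbor to $S$ is the same for every such neighbor. This simultaneously yields both conclusions: the decision does not depend on the individual $\pi(m)$ values, and the optimal $S$ is either $\emptyset$ or the whole set $\mathcal{N}_{NK}^{n,T}$.

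Fixing the strategies of every other node, I would write $U(S)$ as the expectation, over the random leader $\ell$ and the eventual accepted path $P$, of $n$'s positional share $\f{|P|}{i}$ (and $0$ when $n \notin P$). By linearity of expectation,
\begin{equation*}
U(S) - U(\emptyset) \;=\; \sum_{m \in S} \Delta_m,
\end{equation*}
where $\Delta_m$ isolates the contribution coming from the new edge $n \to m$. The central claim I would prove is that $\Delta_m = \pi(m)\cdot\Gamma$, with $\Gamma$ determined only by $n$'s own incoming position $i$ and by the fee-sharing function $\mathcal{F}$, and in particular independent of the identity of $m$. The justification is that from $n$'s viewpoint, every element of $\mathcal{N}_{NK}^{n,T}$ plays an a priori interchangeable role: each receives $T$ freshly through $n$, each is a leader candidate with probability proportional to its own capacity, and $n$'s positional share $\f{i+1}{i}$ in any path that first traverses $n \to m$ is the same value fixed by $\mathcal{F}$.

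Given $\pi(m) > 0$ for every node, $\mathrm{sign}(\Delta_m) = \mathrm{sign}(\Gamma)$, a quantity that does not depend on $m$. Hence a rational $n$ includes every unknowing neighbor in $S$ when $\Gamma > 0$ and none of them when $\Gamma \leq 0$, establishing both the capacity-independence claim and the all-or-none conclusion.

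The hard part will be making the factorization $\Delta_m = \pi(m)\cdot\Gamma$ rigorous in the presence of downstream effects: $m$ may further propagate, alternative paths may race $n$'s path to the leader, and the leader picks the accepted path to maximize its own share. I would address this by conditioning first on the event that $m$ itself is the leader, a contribution that is manifestly proportional to $\pi(m)$ with a constant depending only on $\mathcal{F}$ and on $n$'s position; and then arguing that the remaining events, in which some node reachable only through $m$ becomes the leader, inherit the same proportionality because nothing in $\mathcal{F}$ or in the topology beyond $n$ distinguishes one unknowing neighbor from another, if necessary invoking the lemma inductively along the network for nodes farther from $n$.
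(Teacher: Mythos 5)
Your overall strategy --- show that the marginal gain from informing one more neighbor $m$ has a sign that depends neither on $m$ nor on which other neighbors are already informed --- is the right idea and is, in substance, what the paper does. But the specific technical claim you hang it on, the exact factorization $\Delta_m=\pi(m)\cdot\Gamma$ with $\Gamma$ independent of the already-informed set $S$, fails under the paper's reward model, and the additive decomposition $U(S)-U(\emptyset)=\sum_{m\in S}\Delta_m$ with $S$-independent summands fails with it. The reason is that the expected reward is not linear in the informed set: the leader who includes $T$ is drawn proportionally to capacity \emph{from among the nodes that know $T$}, so the payoff is the ratio
\begin{equation*}
R(X)=\frac{\f{k}{k}\,\pi(n)+\f{k+1}{k}\,X}{\pi(\mathcal{N}_{K}^{n,T})+X},\qquad X=\pi(S),
\end{equation*}
and every new recipient enlarges the denominator, diluting every other term. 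The marginal gain from adding $m$ on top of $S$ is $\pi(m)\bigl(\f{k+1}{k}\pi(\mathcal{N}_{K}^{n,T})-\f{k}{k}\pi(n)\bigr)/\bigl((\pi(\mathcal{N}_{K}^{n,T})+X)(\pi(\mathcal{N}_{K}^{n,T})+X+\pi(m))\bigr)$: proportional to $\pi(m)$ only up to a positive factor that does depend on $S$. What survives --- and is all you need --- is that the \emph{sign} is that of $\f{k+1}{k}\pi(\mathcal{N}_{K}^{n,T})-\f{k}{k}\pi(n)$, independent of both $m$ and $S$; equivalently, $R$ is monotone in the aggregate $X$, so the optimum sits at a boundary, $X=0$ or $X=\pi(\mathcal{N}_{NK}^{n,T})$. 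That one-line monotonicity computation is essentially the paper's entire proof.

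The second issue is that the ``hard part'' you identify --- downstream propagation by $m$, racing paths, the leader's strategic choice among competing paths --- is never actually discharged; you only sketch a conditioning-plus-induction plan. The paper does not solve this inside the Equity Lemma either: it deliberately restricts the lemma to the immediate one-step payoff (only the two events ``$n$ is leader'' and ``a direct recipient of $n$ is leader'' appear in $R$), and handles competition with other informed nodes in the Propagation Lemma and future path-length effects via the permanence condition $\f{k}{i}\geq\f{k+1}{i}$. If you keep your formulation you must either prove the inductive claim or, more simply, scope the lemma the way the paper does.
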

\begin{proof}
Let a transaction $T$ with fee $\ff$ is known by a node $n$, and its distance to the $c_T$ is $k$. 
The expected reward of node $n$ can be defined as a function $R(\cdot)$ whose input corresponds to the capacities of the nodes who received $T$ from $n$, then
\begin{IEEEeqnarray*}{l}
R(X)=\frac{ \f{k}{k}\cdot\pi(n) +\f{k+1}{k}\cdot X }{ \pi({\mathcal{N}_{K}^{n,T}}) + X }.
\end{IEEEeqnarray*}

We show that $R(\cdot)$ is a monotone function.
In order to show that a function is a monotone, it is enough to show that the sign of its derivative does not change in the domain range. 
For our case, it can be seen that the sign is independent of the input:
\begin{IEEEeqnarray*}{lll}
R'(X)&=&\frac{ \f{k+1}{k}\left(\pi({\mathcal{N}_{K}^{n,T}}) + X \right) - \left(\f{k}{k} \pi(n) +\f{k+1}{k} X \right)}{ \left(\pi({\mathcal{N}_{K}^{n,T}}) + X \right)^2 }\\
&=&\frac{  \f{k+1}{k} \pi({\mathcal{N}_{K}^{n,T}}) -\f{k}{k} \pi(n) }{  \left(\pi({\mathcal{N}_{K}^{n,T}}) + X \right)^2 } .
\end{IEEEeqnarray*}

Since $R(\cdot)$ is a monotone function, then it achieves the maximum value at one of the boundary values. 
In our case, the boundary values are $X=0$ where no neighbors received the transaction and $X=\pi \left( \mathcal{N}_{NK}^{n,T} \right)$ where all neighbors received it.
Here, we omit the fact that $\pi(\cdot)$ is also a monotone function.
Thus, we can say that a rational node maximizes his profit by propagating to either all of its neighbors or none of them.
\end{proof}

Lemma \ref{lem:share} simplifies to evaluate interfering multiple node decisions which is discussed in the following Lemma.

\begin{lemma}[Propagation Lemma]\label{lem:ER}
Let a node $n\in \mathcal{N}_K^T$, $\mathcal{N}_{NK}^{n,T}\neq \emptyset$ where the distance between $n$ and $c_T$ is $k$. All neighbors of $n$ will be aware of $T$ if 
\begin{IEEEeqnarray*}{l}
\frac{\f{k+1}{k}}{\f{k}{k}} > \frac{\pi(n)}{\pi({\mathcal{N}_{K}^{n,T}})}\,.
\end{IEEEeqnarray*}
\end{lemma}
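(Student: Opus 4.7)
The plan is to leverage the Equity Lemma directly. In the proof of Lemma~\ref{lem:share}, the expected reward of $n$ as a function of the total capacity $X$ of the neighbors to whom $n$ forwards $T$ was derived as
\[
R(X) \;=\; \frac{ \f{k}{k}\,\pi(n) \;+\; \f{k+1}{k}\,X }{ \pi(\mathcal{N}_{K}^{n,T}) \;+\; X },
\]
and by Lemma~\ref{lem:share} a rational $n$ maximises $R$ at one of the two endpoints, either $X=0$ (forward to no one) or $X=\pi(\mathcal{N}_{NK}^{n,T})$ (forward to every neighbor). So it suffices to show that under the hypothesis of the present lemma the right endpoint strictly dominates the left.

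First I would reuse the derivative computation already carried out in the proof of Lemma~\ref{lem:share}, which shows that the sign of $R'(X)$ is independent of $X$ and equal to the sign of $\f{k+1}{k}\,\pi(\mathcal{N}_{K}^{n,T}) - \f{k}{k}\,\pi(n)$. A one-line rearrangement of this quantity into ratio form recovers exactly the hypothesis $\f{k+1}{k}/\f{k}{k} > \pi(n)/\pi(\mathcal{N}_{K}^{n,T})$. Under that hypothesis $R$ is therefore strictly increasing on $[0,\pi(\mathcal{N}_{NK}^{n,T})]$, which yields $R(\pi(\mathcal{N}_{NK}^{n,T})) > R(0)$, so the unique rational choice is to propagate to every neighbor. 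Since $\mathcal{N}_{NK}^{n,T}$ is by definition the set of neighbors of $n$ that do not yet know $T$, propagating to all of them makes every neighbor of $n$ aware of $T$, which is the claim.

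I do not foresee a real obstacle: essentially all of the work (the monotonicity of $R$ in $X$, the justification that the denominator is $\pi(\mathcal{N}_{K}^{n,T})+X$ rather than the total network capacity, and the reduction to a two-point optimisation) was already dispatched inside the proof of Lemma~\ref{lem:share}. The only point worth a sentence of care is that the hypothesis is a \emph{strict} inequality, so the dominance of the right endpoint over the left is strict rather than weak; this is what upgrades an indifferent node into one that actively forwards. A minor bookkeeping remark, also already implicit in Lemma~\ref{lem:share}, is that $n$'s share $\f{k+1}{k}$ depends only on $n$'s own position $k$ in the path, so the computation is insensitive to how the recipients themselves later decide to propagate.
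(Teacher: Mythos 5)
There is a genuine gap: your argument reduces the Propagation Lemma to the endpoint comparison for the single function $R(X)=\bigl(\f{k}{k}\pi(n)+\f{k+1}{k}X\bigr)/\bigl(\pi(\mathcal{N}_{K}^{n,T})+X\bigr)$, but that function models $n$'s expected reward only under the assumption that no other node in $\mathcal{N}_{K}^{n,T}$ propagates. The lemma is set in a simultaneous-move game, and the reason it is separated from the Equity Lemma is precisely to handle the interference of the other informed nodes' decisions. If some of them do propagate, $n$'s payoff changes in two ways your computation does not capture: the denominator grows by the capacity of the neighbors that the others inform regardless of $n$'s choice (the paper's $NCN_2$, and the common neighbors $CN$ in the ``not propagate'' row of Table~\ref{tab:ER}), and the numerator shrinks because a common neighbor that receives $T$ from both sides keeps the path through $n$ only with some probability $\alpha$, so $n$ collects $\alpha\f{k+1}{k}\pi(CN)$ rather than $\f{k+1}{k}\pi(CN)$. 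The paper's proof checks the propagate-versus-not comparison separately against both opponent strategies and shows the same hypothesis is sufficient in each case; your proof establishes propagation only as a best response to ``nobody else propagates,'' not as the preferred action against every strategy profile of the other informed nodes, which is what the lemma requires.

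A second, smaller omission: in the branch where the others propagate and every unaware neighbor of $n$ is a common neighbor (the paper's $NCN_1=\emptyset$ case), the sufficiency argument for $n$ propagating does not go through, and the conclusion ``all neighbors of $n$ will be aware of $T$'' holds for a different reason --- those neighbors are informed by the other propagators. Since the lemma's conclusion is about awareness rather than about $n$'s own action, this degenerate case must be dispatched explicitly; your proof, which only ever concludes ``$n$ forwards to everyone,'' does not cover it.
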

\begin{proof}
Assume that some of the neighbors of $n$ are not aware of $T$, i.e., $\mathcal{N}_{NK}^{n,T}\neq \emptyset$. 
From Lemma \ref{lem:share}, we know that $n$ did not propagate the transaction to any of his neighbors. 
Therefore, at the moment, the only way that $n$ profits from $T$ is being the round leader with a reward $\f{k}{k}$. 

\begin{table*}[htbp!]
\caption{The expected reward of $n$ from $T$ regarding possible decisions of $n$ and the rest of  $\mathcal{N}_{K}^{n,T}$.}\label{tab:ER}
\begin{center}
\begin{tabular}{ccccc}
	&  & \multicolumn{3}{c}{$\mathcal{N}_{K}^{n,T}$ (excluding $n$)} \\ 
   	& \multicolumn{1}{c|}{Decision} & \multicolumn{1}{c}{Not Propagate} & & \multicolumn{1}{c}{(some) Propagate} \\ 
	\cline{2-5} 
	\multirow{4}{*}{$n$}	& \multicolumn{1}{c|}{Not Propagate} & $\frac{\f{k}{k}\cdot\pi(n)}{ \pi({\mathcal{N}_{K}^{n,T}}) }$  & & \multicolumn{1}{c}{$ \frac{\f{k}{k}\cdot\pi(n)}{ \pi({\mathcal{N}_{K}^{n,T}}) + \pi({CN}) + \pi({NCN_2}) }$ } \\
	& \multicolumn{1}{c|}{} & & \\
	& \multicolumn{1}{c|}{Propagate} & $\frac{\f{k}{k}\cdot\pi(n)+\f{k+1}{k}\cdot \pi({\mathcal{N}_{NK}^{n,T}}) }{ \pi({\mathcal{N}_{K}^{n,T}}) + \pi({\mathcal{N}_{NK}^{n,T}}) }$ & &  $\frac{\f{k}{k}\cdot\pi(n)+\f{k+1}{k}\cdot \pi(NCN_1)+ \alpha \f{k+1}{k}\cdot \pi(CN)  }{ \pi({\mathcal{N}_{K}^{n,T}}) + \pi({\mathcal{N}_{NK}^{n,T}}) + \pi({NCN_2})}$\\ 
\end{tabular}
\end{center}

\end{table*} 

Table \ref{tab:ER} presents expected reward of $n$ with respect to each possible action of $n$ and $\mathcal{N}_{K}^{n,T}$.
The propagation decision of $\mathcal{N}_{K}^{n,T}$ may not include all its members, thereby all possible decisions are taken into account. 
Here, $CN$ corresponds to the common neighbors of $n$ and $\mathcal{N}_{K}^{n,T}$, $NCN_1$ distinct neighbors of $n$ and $NCN_2$ distinct neighbors of $\mathcal{N}_{K}^{n,T}$ (who decide to propagate), i.e., $CN\bigcup NCN_1 = \mathcal{N}_{NK}^{n,T}$. 
Since $CN$ is received the transaction from both $n$ and the rest of the $\mathcal{N}_{K}^{n,T}$, $\alpha$ represents the percentage of the ones in $CN$ decided to continue with the one including $n$.

If all nodes of $\mathcal{N}_{K}^{n,T}$ decide not to propagate with their neighbors, then $n$ will benefit from propagating $T$ in the case of 
\begin{IEEEeqnarray*}{l}
\resizebox{0.48\textwidth}{!}{$
 \frac{\f{k}{k}\cdot\pi(n)+\f{k+1}{k}\cdot \pi({\mathcal{N}_{NK}^{n,T}}) }{ \pi({\mathcal{N}_{K}^{n,T}}) + \pi({\mathcal{N}_{NK}^{n,T}}) } > \frac{\f{k}{k}\cdot\pi(n)}{ \pi({\mathcal{N}_{K}^{n,T}}) } 
 \iff \frac{\f{k+1}{k}}{\f{k}{k}} > \frac{\pi(n)}{\pi({\mathcal{N}_{K}^{n,T}})}.$}
\end{IEEEeqnarray*}

If (some) nodes in $\mathcal{N}_{K}^{n,T}$ decide to propagate $T$, then $n$ will benefit from propagating $T$ in the case of 
\begin{IEEEeqnarray*}{l}
\resizebox{0.48\textwidth}{!}{$	  \frac{\f{k}{k}\cdot\pi(n)+\f{k+1}{k}\cdot \pi(NCN_1)+ \alpha \f{k+1}{k}\cdot \pi(CN)  }{ \pi({\mathcal{N}_{K}^{n,T}}) + \pi({\mathcal{N}_{NK}^{n,T}}) + \pi({NCN_2})} 
	> \frac{\f{k}{k}\cdot\pi(n)}{ \pi({\mathcal{N}_{K}^{n,T}}) + \pi({CN}) + \pi({NCN_2}) } $}\\
\resizebox{0.48\textwidth}{!}{	$ \impliedby \frac{\f{k+1}{k}}{\f{k}{k}} > \frac{\pi(n)}{ \pi({\mathcal{N}_{K}^{n,T}}) + \pi({CN}) + \pi({NCN_2}) } \text{ and } NCN_1\neq\emptyset.$}
\end{IEEEeqnarray*}

Note that $NCN_1 = \emptyset$ means that all the neighbors of $n$ are also neighbors of $\mathcal{N}_{K}^{n,T}$ who decide to propagate.
In addition, the sufficiency condition is independent of $\alpha$.
Therefore, in any case, if $\frac{\f{k+1}{k}}{\f{k}{k}} > \frac{\pi(n)}{\pi({\mathcal{N}_{K}^{n,T}})}$ is satisfied, then all neighbors of $n$ will be aware of the transaction.
\end{proof}


\begin{corollary}\label{corollary1}
Let $\f{k+1}{k}\geq C\cdot \f{k}{k}$ for some constant $C\in(0,1)$. $\mathcal{N}_{K}^{T}$ will continue to expand until there is no more node $n\in \mathcal{N}_{K}^{T}$ having neighbors in $\mathcal{N}_{NK}^{T}$ and satisfying $ \pi(n) < C\cdot\pi(\mathcal{N}_{K}^{n,T})$.
\end{corollary}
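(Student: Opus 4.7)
The plan is to obtain the corollary as a straightforward iterative application of the Propagation Lemma (Lemma \ref{lem:ER}). First I would rewrite the hypothesis $\f{k+1}{k} \geq C\cdot \f{k}{k}$ as $\f{k+1}{k}/\f{k}{k} \geq C$, so that any node $n \in \mathcal{N}_K^T$ with a neighbor in $\mathcal{N}_{NK}^T$ and satisfying $\pi(n) < C\cdot \pi(\mathcal{N}_K^{n,T})$ automatically satisfies
\begin{IEEEeqnarray*}{l}
\frac{\pi(n)}{\pi(\mathcal{N}_K^{n,T})} < C \leq \frac{\f{k+1}{k}}{\f{k}{k}},
\end{IEEEeqnarray*}
which is exactly the hypothesis of Lemma \ref{lem:ER}. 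Applying the lemma then forces every neighbor of $n$ into $\mathcal{N}_K^T$, so one such expansion strictly enlarges the set of informed nodes.

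Next I would organize this as an induction on the number of propagation steps. Starting from the initial informed set (at least the client together with its neighbors), I repeatedly pick any qualifying $n$ and invoke Lemma \ref{lem:ER}, at each step strictly enlarging $\mathcal{N}_K^T$. Since the network is finite, the process terminates in finitely many steps, and termination by construction occurs precisely when no node $n \in \mathcal{N}_K^T$ with a neighbor in $\mathcal{N}_{NK}^T$ still satisfies $\pi(n) < C\cdot \pi(\mathcal{N}_K^{n,T})$, which is the statement of the corollary.

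The only subtlety I anticipate is checking monotonicity: one needs to confirm that later expansions do not retroactively invalidate the decisions already made. This follows because $\pi(\mathcal{N}_K^{n,T})$ is nondecreasing in $\mathcal{N}_K^{n,T}$, so once a node $n$ satisfies $\pi(n) < C\cdot \pi(\mathcal{N}_K^{n,T})$ and propagates, the inequality continues to hold as the informed set grows; moreover, any previously borderline node only gains additional incentive to propagate as $\pi(\mathcal{N}_K^{n,T})$ increases. Hence the process is genuinely monotone and the corollary follows without further work, the main obstacle being purely bookkeeping rather than any new analytic argument.
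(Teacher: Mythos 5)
Your proposal is correct and matches the paper's (implicit) reasoning: the corollary is stated without proof as an immediate iterative consequence of Lemma \ref{lem:ER}, which is exactly the argument you give, including the observation that $\pi(\mathcal{N}_K^{n,T})$ only grows so the propagation condition can only become easier to satisfy. No gaps.
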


\noindent\textbf{Remark I.} 
Here, it is possible to define different $C_k$ values for each distance $k$, i.e., $\f{k+1}{k}\geq C_k\cdot \f{k}{k}$. 
One might argue that, as the distance increases, it could be possible to find nodes satisfying $ \frac{ \pi(n) }{ \pi(\mathcal{N}_{K}^{n,T}) } < C_k$ for smaller $C_k$ values.
However, as seen in Section \ref{sec:combined}, this is not always the case. 
In addition, the intermediate node may not know the exact distance, thus using the same $C$ value would make the decision simpler.

\noindent\textbf{Remark II.} 
Note that the propagation decision is based on $\mathcal{N}_{K}^{n,T}$ instead of $\mathcal{N}_{K}^{T}$ since the latter one may not be available.
This could lead to better consequences for propagation because nodes may predict $\mathcal{N}_{K}^{T}$ greater than its actual size and decide accordingly.
Nonetheless, a carefully chosen $C$ value will lead the nodes to share it with an overwhelming probability.

\noindent\textbf{Remark III.}
Being the round leader should be more appealing than being an intermediary node, thus the round leader would try to fulfill the round block capacity to maximize his profit. 
The system may not work at full capacity if the nodes gain the same reward from propagating instead of validating (as the round leader) transactions.
In Corollary \ref{corollary1}, the propagation condition is given as $\f{k+1}{k}\geq C\cdot \f{k}{k}$.
We fix the condition in favor of the round leader:
\begin{IEEEeqnarray}{l}\label{eqn:incentive1}
\forall\, k,\quad \f{k+1}{k}= C\cdot \f{k}{k}\,.
\end{IEEEeqnarray}

\noindent\textbf{Permanence condition.} 
In the simultaneous move analysis, we investigated one step at a time, i.e., what will happen immediately after the decision of propagation. 
However, all possible future actions should be taken into account.
For example, the sender of a transaction should consider the possibility of the further propagation done by the receiver. 
From Lemma \ref{lem:share}, capacities of the neighboring nodes do not have any influence on the sharing decision. 
Unless the processing fee share decreases, which is caused by some possible future actions like increased path length, the same lemma will be satisfied.  
If the share of a propagating node is non-decreasing with respect to the path length, then the ones who propagate will not suffer from any future actions. This can be formulated as
\begin{IEEEeqnarray}{l}\label{eqn:incentive2}
\forall\, i<k, \quad \f{k}{i}\geq\f{k+1}{i} \,.
\end{IEEEeqnarray} 

\subsection{Fee Sharing Function}\label{sec:feesharing}
With the equations obtained from the required conditions, we can uniquely determine the fee sharing function and conclude Theorem \ref{thm:main}.
First, using permanence condition (\ref{eqn:incentive2}), Sybil-proofness condition (\ref{eqn:sybil}), can be reduced to $ \f{k}{k} \geq \f{k+1}{k+1} + \f{k+1}{k}$:
\begin{IEEEeqnarray*}{lll}
\forall\, k\geq 1,\;  \f{k}{k} &&\geq \f{k+1}{k+1}+\f{k+1}{k}\geq \f{k+2}{k+2}+\f{k+2}{k+1}+\f{k+1}{k}\\
&&\geq \f{k+3}{k+3}+\f{k+3}{k+2}+\f{k+2}{k+1}+\f{k+1}{k}\geq \cdots\\
\forall\, s\geq 1,&&\geq \f{k+s}{k+s}+\sum_{i=0}^{s-1}\f{k+i+1}{k+i}
\geq \f{k+s}{k+s}+\sum_{i=0}^{s-1}\f{k+s}{k+i}. 
\end{IEEEeqnarray*}
Therefore, we can update the Sybil-proofness condition as:
\begin{IEEEeqnarray}{l}
\forall\, k\geq 1, \quad \f{k}{k} \geq \f{k+1}{k+1} + \f{k+1}{k} \,.\label{eqn:sybil_new}
\end{IEEEeqnarray}
Then, we can obtain the following equations:
\begin{IEEEeqnarray}{lll}
\text{Using (\ref{eqn:sybil_new})} & & \sum_{i=1}^{k} \f{i}{i} \geq \sum_{i=1}^{k} \f{i+1}{i+1} + \sum_{i=1}^{k} \f{i+1}{i}\nonumber\\
										&\implies& \ff=\f{1}{1} \geq \f{k+1}{k+1}+ \sum_{i=1}^{k} \f{i+1}{i} \nonumber\\
\text{Using (\ref{eqn:incentive2})}			&\implies& \ff \geq \f{k+1}{k+1}+ \sum_{i=1}^{k} \f{k+1}{i} = \ff \nonumber\\
										&\implies& \f{k}{i} = \f{k+1}{i} \text{ and } \f{k}{k}=\f{k+1}{k+1}+\f{k+1}{k}. \label{eqn:equal}
\end{IEEEeqnarray}			
After all, we can finalize the fee sharing function which corresponds to Theorem \ref{thm:main}. Using 
(\ref{eqn:incentive1}) and (\ref{eqn:equal}), the share of the round leader can be computed:
\begin{IEEEeqnarray}{l}
 \f{k}{k}=\f{k-1}{k-1}(1-C)=\cdots= \ff\cdot (1-C)^{k-1}. \label{eqn:lead}
\end{IEEEeqnarray}
Using (\ref{eqn:equal}) and (\ref{eqn:lead}), the share of an intermediary node can be computed:
\begin{IEEEeqnarray}{l}
\forall\, i<k,\quad  \f{k}{i} = \f{i+1}{i} = \ff\cdot C(1-C)^{i-1}\,. \nonumber
\end{IEEEeqnarray}


\subsection{Discussion}\label{sec:inc_discussion} 

\noindent\textbf{Integration.} 
Implementation of the incentive mechanism should take into account the security and efficiency concerns.
The propagation path should be immutable in a way that an adversary cannot add or subtract any node neither in the propagation process nor after the block generation.
At the same time, storage efficiency is also essential since these path logs are needed to be stored in the ledger by every node.
Both existing incentive-compatible blockchain solutions \cite{babaioff2012bitcoin,abraham2016solidus} adopted a signature chaining mechanism where each propagated message includes the public key of the receiver and signature of the sender.
This protocol prevents any manipulation over the path and thereby secures the shares of each contributor. 
It requires additional storage which is the signatures of the contributors.
Although signature chaining solution requires the knowledge of the public key of the receiver and stores signatures of each sender, it is generic and can be applied to any blockchain.
In Section \ref{sec:combined}, we present a novel and storage-efficient solution which is feasible for \textit{FLTB} blockchains.
It is embedded into routing mechanism and does not require the knowledge of the public keys of the neighboring nodes.

\noindent\textbf{Determining $C$ parameter.} 
$C$ value plays an important role to make sure that there will be incentive to propagate a transaction for some nodes until it reaches to the whole network.
On the one hand, as the choice for the $C$ value increases, it will be easier to satisfy the propagation condition since there will be more chance to find nodes satisfying $\pi(n) < C\cdot\pi(\mathcal{N}_{K}^{T})$.
On the other hand, the higher $C$ value, the lower fee remains for the rest of the propagation path. 
It significantly reduces the fee of the round leader, thereby the incentive.
For these reasons, it is required to choose a moderate $C$ value, e.g., a reasonable choice would be $C=\frac{2}{N_{con}}$ where $N_{con}$ denotes default number of connections of a node.
For example, in Bitcoin network where $N_{con}=8$, nodes will propagate unless they assume that their mining power is greater than 25\% of the ones having the transaction.
Even at the very beginning, at least $N_{con}$ nodes have the transaction, $C=\frac{2}{N_{con}}$ setting would provide overwhelming probability to have nodes willing to propagate according to Corollary \ref{corollary1}.

\noindent\textbf{Client ($0-$capacity) nodes.} 
The main goal of the propagation incentive mechanism is to make sure that the transactions are received by the nodes who are capable of validating transactions as well as creating blocks. 
For that reason, we mainly focused on the nodes having a capacity greater than zero, i.e., $\pi(\cdot)>0$.
Nevertheless, a client node can be seen as a potential capacity node because of the possible propagation of the client.
Regarding Lemma \ref{lem:share} and permanence condition (\ref{eqn:incentive2}), a rational node, who decided to propagate, would benefit from propagating to the client nodes as well.
At the same time, a client node will always benefit from propagating any transaction since otherwise it will not have any chance to gain a fee.

\noindent\textbf{Decentralization effect.}
In the conventional permissionless blockchains, all rewards including block reward and transaction fees are given to the block owner. 
In other words, nodes have only one incentive to participate in the network: being round leader.
The less chance individual nodes have to be the round leader, the more they are motivated to join into centralized forms (e.g. mining pools) \cite{gervais2014bitcoin,Lewenberg:2015:BMP:2772879.2773270}.
Conversely, the transaction fee is shared with all propagators nodes.
In addition, since many transactions are included in a single block, aiming processing fees of (some) transactions has significantly more chance than being the round leader.
Thereby, it is reasonable to conclude that incentive mechanism would have a positive impact on the decentralization of the permissionless blockchains.
\section{Routing Mechanism}\label{sec:routing}

As a non-hierarchical peer-to-peer network, the blockchain ledger is validated by all nodes (miners) individually.
This requires broadcasting every data and blocks over the network since every node needs to keep a record of the chain to validate new blocks.
In existing permissionless blockchains, every transaction is broadcast throughout the network by the client, then the new block including (some of) these is constructed and broadcast by the round leader.
Hence, each transaction is broadcast at least twice.
Even more (\texttt{inv}) messages are sent to check the awareness of the neighbors on the transaction.

In Nakamoto-like consensus protocols, the round leader is validated simultaneously with his proposed block where the redundant propagation of the client is inevitable.
In \textit{FLTB} protocols, on the other hand, it is possible to validate the round leader before the block is proposed.
It enables to determine a direct route between each client and the round leader.
Our routing mechanism in Algorithm \ref{alg:routing_v2} finds the shortest paths between clients and the round leader for each round.
Instead of sending each transaction to all nodes in the network, it is relayed over the shortest path between the client and the leader. 
The distance between (almost) any two nodes in a connected graph is dramatically smaller than the size of the network \cite{travers1967small}.
This is equivalent to cost reduction from $O(N)$ to $O({\ln N})$ in a random network of size $N$ \cite{albert2002statistical,erdos1960evolution}.

The treat model of routing mechanism we present in this section considers a malicious adversary rather than a rational one. 
In the routing mechanism, a malicious adversary may try to block or censor some of the transaction propagations.

Our protocol can be divided into two parts: \textit{Recognition Phase} where the routes are determined and \textit{Transaction Phase} where the transactions are propagated (see Figure \ref{fig:routing}). 
First, in the recognition phase, the round leader is recognized throughout the network and his credential is propagated with a standard gossip protocol. 
Each node $n_i$ learns his closest node towards the round leader, \textit{gradient node} ($gn_i$), who is the first node forwarding the credential.
In the transaction phase, each client forwards his transaction to (some of) his neighbors. 
Then, each node, receiving a transaction for the first time, directly transmits to his gradient node.
Here, the reason for clients to broadcast to more than one neighbor is that one path could yield a single point of failure.
It could be caused by the nodes who fail or maliciously censor some of the transactions.
As presented in the experimental results, forwarding transaction to a few of the neighbors (precisely $N_{con}$) is sufficient.
Note that, the routing mechanism works under asynchronous network assumptions since a client does not have to wait for all nodes but $N_{con}$ of his neighbors.
Similarly, for an intermediary node, waiting for the first credential message is enough to propagate received transactions.

\alglanguage{pseudocode}
\begin{algorithm}[htp!]
	\begin{algorithmic}
		\State
		\State \underline{Recognition Phase}
		\State Leader provides his credential $\mathcal{L}^r$ to his neighbors.
		\For{Node $n_1$ to $n_N$}
		\If{First time receiving $\mathcal{L}^r$}
		\State Store ID of the sender (gradient) node $n_j$, i.e., $gn_i \leftarrow n_j$
		\State Propagate $\mathcal{L}^r$ to neighbors.
		\EndIf
		\EndFor
		\State 
		\State \underline{Transaction Phase}
		\State  Client provides transaction $T$ to his neighbors.
		\For{Each node $n_i$ receiving $T$}
		\If{First time receiving $T$}
		\State Send it to the $gn_i$
		\EndIf
		\EndFor
	\end{algorithmic}
	\caption{The Routing Algorithm} 
	\label{alg:routing_v2}
\end{algorithm}

\begin{figure}[htp!]
	\begin{center}
		\includegraphics[scale=0.4]{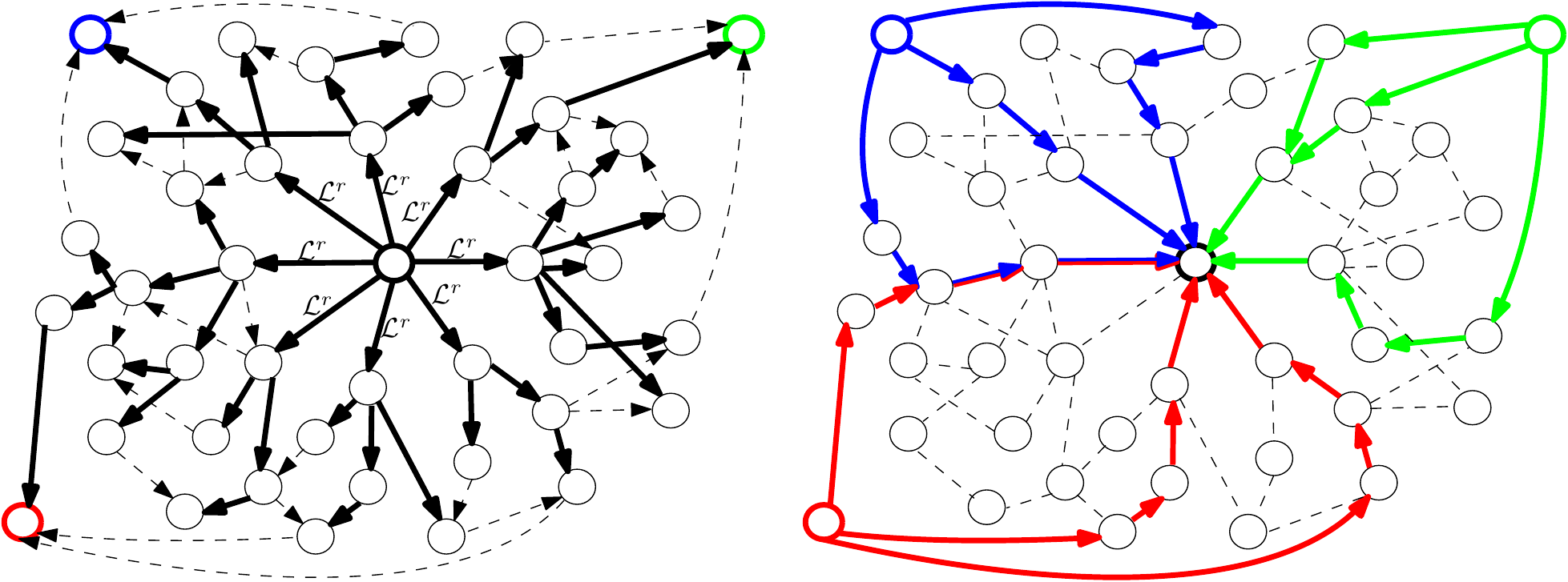}
		\caption{The Routing Mechanism. The left one illustrates the Recognition Phase and connections to the gradient nodes are shown with bold solid lines. On the right, three clients and their transaction paths are presented.}\label{fig:routing}
	\end{center}
\end{figure}

\noindent\textbf{Locational privacy.} 
There have been several papers investigating anonymity in the permissionless blockchain networks, especially for the Bitcoin network \cite{biryukov2014deanonymisation,NIPS2017_6735,Koshy2014}. 
It is found out that matching public keys and IP addresses can be done by eavesdropping.
In this manner, \textit{FLTB}-based blockchains may expose to DoS (denial-of-service) attacks against to the round leader.
We want to stress that our routing mechanism does not leak any more locational information about the position of the leader other than the original \textit{FLTB} protocols do.
It just takes advantage of the announcement of the leader which is done exactly in the same manner with the \textit{FLTB} protocols.
Therefore, our routing mechanism does not cause any additional vulnerabilities for DoS-like attacks against the round leader.
Yet, it is possible to improve the locational privacy via anonymity phase where the message is first forwarded in a line of nodes, then diffused from there \cite{bojja2017dandelion}. 
The extra cost of anonymity would be a few nodes on the line which is still proportional to the logarithmic size of the network. 


\subsection{Experimental Results}
In this experiment, we use Barab{\'a}si-Albert (BA) graph model \cite{albert2002statistical} which simulates peer discovery in a peer-to-peer network. 
It starts with a well-connected small graph and each new node is connected to some of the previous nodes with a probability proportional to their degrees.

Barab{\'a}si-Albert (BA) \cite{albert2002statistical} and Erd{\H{o}}s-R{\'e}nyi (ER) \cite{erdos1960evolution} graph models have been used to simulate permissionless blockchains \cite{neudecker2016timing,berini2015bitcoin}.
In our setting, we combine both models where the network starts with a small ER graph and grows according to BA model.
We start with 50 nodes in ER model \cite{erdos1960evolution} with edge probability of $1/2$, meaning that on average each node has 25 connections. 
Then, each new node is added by connecting with $N_{con}$ nodes in the network.
For each $(N,N_{con})$ pair analyzed, we generated various graphs using Python graph library \cite{igraph}. 

\noindent\textbf{Bandwidth gain.} In \cite{randomgraphs}, the average shortest path length between any two nodes, i.e., the average path length, of a BA graph is shown to be in the order of $\frac{\ln N}{\ln \ln N}$. Hence, our routing protocol reduces the communication cost of a message transaction from $O(N)$ to $O(N_{con}\cdot \frac{\ln N}{\ln \ln N})$. 
The communication gain is up to $99\%$ for scaled networks (see Figure \ref{fig:gain}), which can be verified by counting the average number of nodes visited per transaction. 
Here, we assume that the first arriving credential is coming from the node which is closest to the leader with respect to the number of nodes in between. In other words, the delay between any two nodes is computed by the node-distance.

In Figure \ref{fig:gain}, we count only one redundant communication for each transaction. Even more redundancy is caused by the flooding of each transaction because the same transaction is received from different neighboring nodes.
In other words, the total redundancy is not $N$, but on average $N_{con}\cdot N$.
In the existing blockchains, this additional redundancy is reduced by the sending the hash of the transaction to check whether the neighbor has it or not.
If storage size of the transaction is relative to the size of the hash, then the total number of relays of a transaction would be significantly more than double of the network size.
For example, Statoshi info \cite{statoshi}, a block explorer of Bitcoin, shows that average incoming bandwidth usage for the transactions (\texttt{tx}) is, $2.87$ KBps, less than for the checking messages (\texttt{inv}), $4.12$ KBps (measurements taken between 02:00 AM and 14:00 PM in 13 of Feb. 2018).
To conclude, since our mechanism does not suffer from the flooding effect, the actual communication gain would be much higher than the result in Figure \ref{fig:gain}.

\begin{figure}[!htb]
	\includegraphics[scale=0.18]{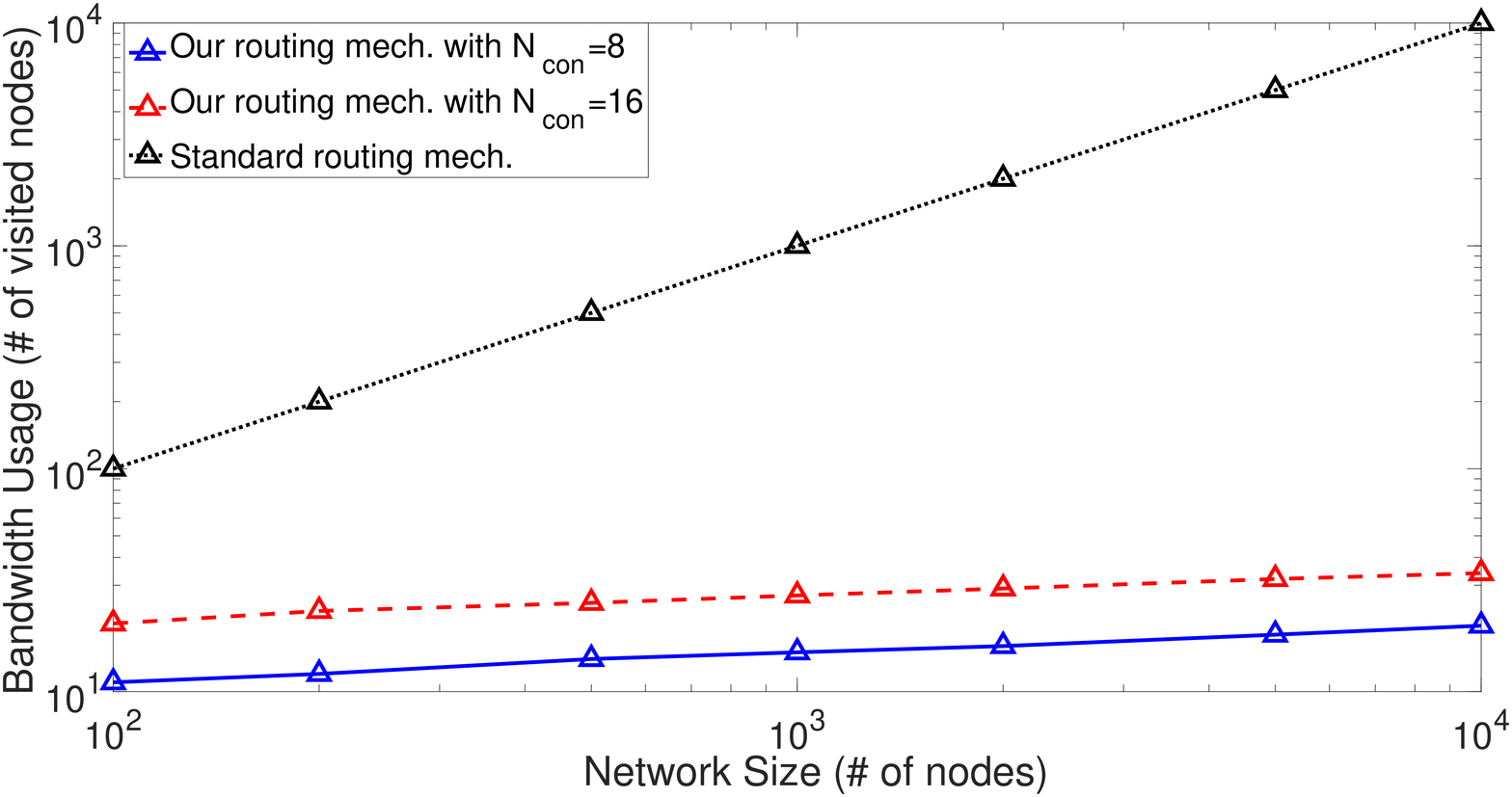}
	\caption{Communication cost for advertisement of a transaction.}
	\label{fig:gain}
\end{figure}

\noindent\textbf{Failing transmissions.} Since each transaction is propagated among a small set of nodes, we need to take into account the possibility of propagation failure which can be caused by the nodes who fail or censor the transaction. 
The failure probability of a transaction can be approximated by $\left( 1-(1- h)^{\frac{\ln N}{\ln \ln N} -1}\right)^{N_{con}}$ where $h$ denotes the probability of a node in the network who fails or censors the transaction. 
These failing nodes are the ones who were present at the recognition phase and failed just afterwards.
Long-term offline nodes can be ignored since they will not be chosen as gradient nodes.
Thus, Figure \ref{fig:failing} demonstrates that our routing is robust against instant network fluctuations.
For a blockchain network with $N=10000$ and $N_{con}=8$, similar to Bitcoin network, if $30\%$ of the active nodes fail after the recognition phase, only $9\%$ of the transactions will be affected.

\begin{figure}[!htb]
	\includegraphics[scale=0.18]{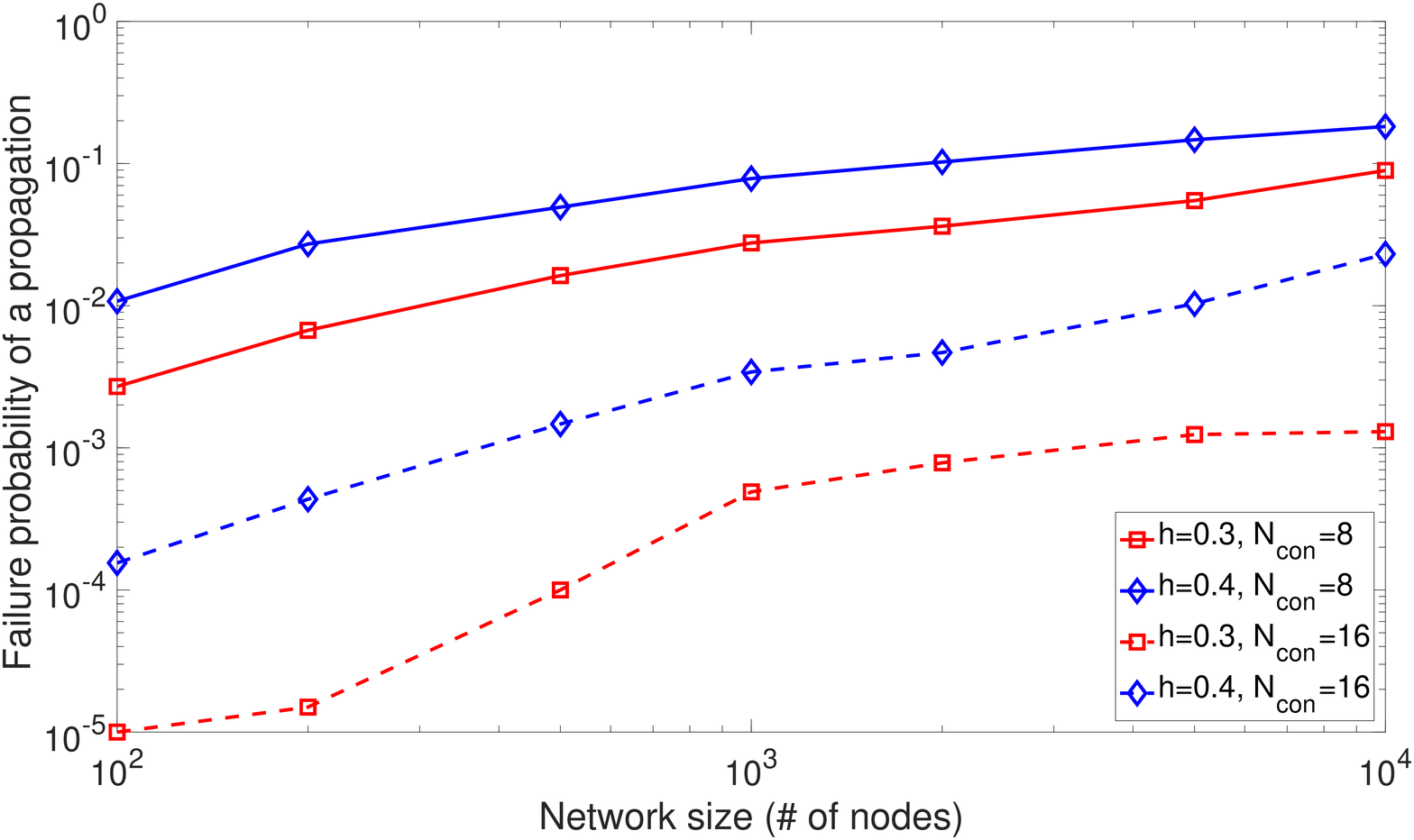}
	\caption{Probability of a transaction failing to be received by the round leader where $h$ is the probability of an intermediary node being a failing or censoring node.}
	\label{fig:failing}
\end{figure}
\section{Combined Propagation Mechanism}\label{sec:combined}
In this section, we show how to deploy both of the incentive and routing mechanisms for any blockchain having a \textit{FLTB} consensus protocol. 
At first glance, they seem to conflict with each other because the incentive mechanism is used to encourage propagation while the routing mechanism helps to reduce redundant propagation.
We combine them in a way that rational nodes are encouraged to propagate only the transactions which are coming from the predefined paths of the routing mechanism.
As demonstrated in Algorithm \ref{alg:combined}, we use the same infrastructure with the routing mechanism, and we include proofs of the intermediary nodes such that their contributions cannot be denied. 
Each transaction path is defined and secured by a path identifier which includes the public keys of the propagating nodes.
Blocks consist of transactions as well as their path identifiers used to claim processing fee shares.

In the recognition phase, each intermediary node conveys the leader credential and the path identifier.
Incoming and outgoing path identifiers of a node $n$ are denoted by $IN_{n}$ and $OUT_{n}$, which are used to validate and secure the propagation path.
The round leader $\ell$ produces the initial identifier, $OUT_{\ell}=H(\mathcal{L}^r,PK_{\ell})$, and propagates to his neighbors.
Each node $n$ updates the identifier coming from the gradient node by $OUT_{n}=H(IN_{n},PK_{n})$.
This operation is done just for the gradient node (first one sending $\mathcal{L}^r$), then updated identifier and the credential are forwarded to the neighbors.
Nodes may ignore the subsequent identifiers except a client who stores the first $N_{con}$ ones for the transaction phase.

After the routing paths are determined, each client delivers the signed transaction and the incoming identifier to his $N_{con}$ neighbors.
The first receiving nodes, check the signature, then add their public keys to the transaction and forward it to their gradient nodes.
From that point, each intermediary node in the path first checks the validity of the path via the public keys included and his own identifier, then forwards the transaction including his public key to the gradient node.

Once transactions are received by the round leader, he includes the valid ones into the block.
The block consists of the credential, hash of the previous block and valid transactions with their paths.
Then, the block is propagated throughout the network.

\alglanguage{pseudocode}
\begin{algorithm}[htp!]
\begin{algorithmic}
\State
\State \underline{Recognition Phase}
\State Leader $l$ propagates $\mathcal{L}^r$
\For{Each node $n_i$}
\If{First time receiving $\mathcal{L}^r$ and $IN_{n'}$}
\If{$\mathcal{L}^r$ is valid}
\State Assign $IN_{n_i}\leftarrow IN_{n'}$ and gradient node as $n'$
\State Compute $OUT_{n_i}=H(IN_{n_i},PK_{n_i})$
\State Propagate $\mathcal{L}^r$ and $OUT_{n_i}$ to neighbors.
\EndIf
\EndIf
\EndFor
\State 
\State \underline{Transaction Phase}
\State  Client $c_T$ provides $Signed(T,IN_{c_T})$ (and $\mathcal{PK}=\emptyset$) to the first $N_{con}$ gradient nodes.
\For{Each node $n_i$ receiving $Signed(T,IN_{c_T})$ and $\mathcal{PK}$}
\If{First time receiving $T$}
\If{Signature path holds}
\State Update $\mathcal{PK}\leftarrow \mathcal{PK} \bigcup \{ PK_{n_i} \}$
\State Send $Signed(T,IN_{c_T})$ and $\mathcal{PK}$ to the gradient node.
\EndIf
\EndIf
\EndFor
\end{algorithmic}
\caption{The Combined Propagation Algorithm} 
\label{alg:combined}
\end{algorithm}


\noindent\textbf{Incentive for block propagation.}
As a consequence of the incentive and routing mechanisms, intermediary nodes also have incentives to propagate the block since they share processing fees. 
Even more, the ones who are closer to the leader would have higher motivation since they probably gain from more transactions.

\noindent\textbf{Storage efficiency.}
Any propagation incentive mechanism requires additional data storage than the data itself to keep track of the propagation path.
Previous works having incentive \cite{abraham2016solidus,babaioff2012bitcoin} utilize signature chains where each node signs the transaction and the public key of the receiver.
Therefore, additional to the transaction, the signature package of each propagating node is included.
On the other hand, our solution with the path identification benefits from the recognition phase of the routing protocol, and its additional storage requirement is only the public keys of propagating nodes and a signature of the client.
Since the ability to claim propagation reward and the validation of the path need to be available, our propagation mechanism demands minimal storage components.

\noindent\textbf{Privacy of the intermediary nodes.}
Signature chains and the proposed path identifier yield a direct connection between nodes network ID and their public keys. 
Unlike signature chains, our solution consists of two phases and the propagating nodes validate it by checking whether their input is preserved or not.
This enables us to tackle the privacy issue by replacing plain public keys with commitments.
Instead of directly including a public key, each node can obscure it in a simple commitment with a random number ($CT_i=H(PK_i,R_i)$).
All verifications can be handled with the commitments while claiming propagation reward requires to reveal it.
The commitment version uses the same network structure without compromising the identities of the nodes except clients and the round leader. The location of the round leader and clients will be known to their neighbors.
They may need to update their key pairs or replace their connections for the next rounds.

%
\section{Conclusion}\label{sec:conc}
In this work, we investigated two transaction propagation related problems of block\-chains: incentive and bandwidth efficiency. 
We presented an incentive mechanism encouraging nodes to propagate messages, and a routing mechanism reducing the redundant communication cost. 

We analyzed the necessary and sufficient conditions providing an incentive to propagate messages as well as to deviate participants (nodes) from introducing Sybil nodes.
We studied different types of network topologies and we showed the impossibility result of the Sybil-proofness for the 1-connected model.
We formulated the incentive-compatible propagation mechanism and proved that it obeys the rational behavior.

We presented a new aspect of the consensus algorithms, namely first-leader-then-block protocols.
We proposed a smart routing mechanism for these protocols, which reduces the redundant transaction propagation from the size of the network to the scale of average shortest path length.
Finally, we combined incentive and routing mechanisms in a compatible and memory-efficient way.


\noindent\textbf{Future work and open questions.} 
In Section \ref{sec:inc_discussion}, we mentioned the parameter choice and possible outcomes of the incentive mechanism.
Detailed effect of incentive model and parameter choice are left as a future work.
Another open question is the effect of the incentive mechanism on the topology of the network.
Nodes would benefit from increasing their connection to contribute more transaction propagations, i.e., it would increase the connectivity of the network.
Using that result, a rigorous analysis on the choice of the $C$ parameter can be done.
Finally, there are open problems regarding the economics of the transaction fee: analyzing the accuracy of the de facto formulas in the existing cryptocurrencies with respect to the cost of the propagation and validation and investigating the possible impacts of the sharing fee like decentralization effect.
\section{Acknowledgment}
This work was supported by NWO Grant 439.16.614 Blockchain and Logistics Innovation.

\bibliographystyle{IEEEtran}
\bibliography{references}

\end{document}